\newcounter{notecounter}
\newenvironment{noted}[1][]
{\unskip\space\ignorespaces\refstepcounter{notecounter}}
{\unskip\space\ignorespacesafterend}
\newcommand{\note}[1]{\text{\begin{noted}$\stackrel{\text{(\alph{notecounter})}}{#1}$\end{noted}}}
\newcommand{\makeAlph}[1]{\@alph{#1}}
\newcommand{\refnote}[1]{(\makeAlph{\numexpr\value{notecounter}-#1\relax})}
\newcommand{\set}[1]{\mathcal{#1}}
\newcommand{\sset}[1]{\mathbb{#1}}
\DeclarePairedDelimiter\brackets{[}{]}
\DeclarePairedDelimiter\parenthesis{(}{)}
\DeclarePairedDelimiter\braces{\{}{\}}
\DeclarePairedDelimiter\bars{\lvert}{\rvert}
\DeclarePairedDelimiter\floor{\lfloor}{\rfloor}
\DeclarePairedDelimiter\ceil{\lceil}{\rceil}
\newcommand{\expectation}[2][]{\sset{E}\brackets[#1]{#2}}
\newcommand{\probability}[2][]{\sset{P}\brackets[#1]{#2}}
\newcommand{\variance}[2][]{\mathrm{Var}\parenthesis[#1]{#2}}
\newcommand{\indicator}[2][]{\mathds{1}\braces[#1]{#2}}
\newcommand{\cardinality}[2][]{\bars[#1]{#2}}
\newcommand{\abs}[2][]{\bars[#1]{#2}}
\DeclarePairedDelimiterXPP\sqrd[1]{}{(}{)}{^2}{#1}
\newcommand{\variationaldistance}[3][]{\mathbb{V}\parenthesis[#1]{#2, #3}}
\newcommand{\typicalset}[2][]{\mathcal{T}_\epsilon^n\parenthesis[#1]{#2}}
\DeclareMathOperator*{\concat}{\scalerel*{\Vert}{\sum}}
\newtheorem{theorem}{Theorem}
\theoremstyle{definition}
\newtheorem{definition}{Definition}
\newtheorem{proposition}{Proposition}
\begin{document}
\title{Secret Sharing with Additive Access Structures from Correlated Random Variables}
\author{\IEEEauthorblockN{David Miller, R\'{e}mi A. Chou}
\IEEEauthorblockA{\textit{Department of Computer Science and Engineering} \\
\textit{University of Texas at Arlington}\\
Arlington, TX, USA \\
david.miller2@mavs.uta.edu, remi.chou@uta.edu\thanks{This work was supported in part by NSF grant CCF-2047913 and CCF-2401373.  }} 
}
\IEEEoverridecommandlockouts

\maketitle

\begin{abstract}
    We generalize secret-sharing models that rely on correlated randomness and public communication, originally designed for a fixed access structure, to support a sequence of dynamic access structures, which we term an Additive Access Structure. Specifically, the access structure  is allowed to monotonically grow by having any subset of participants added to it at a given time step, and the dealer only learns of these changes to the access structure on the time step that they occur. For this model, we prove the existence of a secret sharing strategy that achieves the same secret rate at each time step as the best known strategy for the fixed access structure version of this model. We also prove that there exists a strategy that is capacity-achieving at any time step where the access structure is a threshold access structure.
\end{abstract}

\allowdisplaybreaks
\section{Introduction}
Secret Sharing is the problem of distributing private pieces of information called shares to a set of participants such that the combined shares of an authorized subset of participants gives enough information to recover a secret, but the combined shares of an unauthorized subset of participants reveals no information about this secret. Since secret sharing was introduced in \cite{Shamir_Secret_Sharing}, it has seen a wide variety of applications such as secure multiparty computation \cite{Multiparty_unconditionally_secure_protocols}, \mbox{Byzantine agreement \cite{byzantine_generals}}, threshold cryptography \cite{Shared_generation_of_authenticators_and_signatures}, and generalized oblivious \mbox{transfer \cite{Alternative_Protocols_for_Generalized_Oblivious_Transfer}}. Most of the work on secret sharing has focused on models where shares are distributed using private channels, e.g., \cite{Secret_Sharing_Survey}. In these models, efficiency is measured by minimizing the share size relative to the secret size. There is also existing work on secret sharing that replaces private channels with correlated randomness combined with a public channel \cite{zou2015information, 
Distributed_Secret_Sharing, 
Capacity_of_a_shared_secret_key,rana2021information}.  In this line of work, efficiency is quantified by the secret rate, defined as the size of the shared secret normalized by the number of source observations. However, in both of these models, the strategies assume the access structure remains fixed. If the access structure were to change, all of the shares would have to be recreated, potentially leading to inefficiency. Instead, if shares were able to be reused as the access structure changed, this inefficiency could be mitigated. For instance, this idea is pursued by \cite{How_to_share_a_secret_infinitely} which introduced the Evolving Access Structure (EAS): a model where shares are distributed to participants as they join the system and the access structure can grow monotonically by adding authorized groups formed with the participant that most recently joined. Further work on the EAS includes \cite{Evolving_Secret_Sharing, Evolving_Secret_Sharing_Made_Short, Secret_Sharing_on_Evolving_Multi-level_Access_Structure}. 

In this paper, instead of considering a model where participants join one by one over time, and have shares communicated to them over a private channel, we consider a model where all  the participants are part of the model initially, shares are produced by a correlated source of randomness, and the dealer communicates to the participants using a public channel. We call our model an Additive Access Structure (AAS) to distinguish it from the EAS since it differs in two major ways. First, there is no notion of participants joining, and as a result, there is no restriction on which subsets of participants can be added to the access structure at a given time step. Second, since all the shares are created simultaneously from a correlated source of randomness, such as channel gain measurements obtained in a wireless communication \mbox{network 
\cite{Channel_Identification_Secret_Sharing_Using_Reciprocity_in_Ultrawideband_Channels, 
Automatic_Secret_Keys_From_Reciprocal_MIMO_Wireless_Channels_Measurement_and_Analysis, 
Information-Theoretically_Secret_Key_Generation_for_Fading_Wireless_Channels, 
Experimental_aspects_of_secret_key_generation_in_indoor_wireless_environments}}, public messages are the mechanism used to allow newly authorized groups to gain enough information to recover the shared secret.

More specifically, our model for secret sharing can be described as follows. Initially, the dealer and each participant observe independent and identically distributed realizations of a correlated source of randomness and the access structure will begin with no authorized groups. Whenever one or more authorized groups are added to the access structure at a given time step, the dealer will learn the new access structure, determine a secret to be shared, and send message to every participant over a public, noiseless, authenticated channel. Since this is an iterative process, we will define an AAS as the sequence of access structures produced by adding authorized groups, and have each access structure in the AAS indexed by the time step it is produced. 

For example, consider the case where there are four participants $\set{L} = \{1,2,3,4\}$. At the first time step, it might be decided that the group $\{1,2,4\}$ should become authorized. This will make the access structure at $t=1$ the pair $(\sset{A}_1, \sset{U}_1)$ where $\sset{A}_1 = \{\{1,2,4\}, \{1,2,3,4\}\}$ and $\sset{U}_1 = 2^{\set{L}} \backslash \sset{A}_1$. The dealer, with the knowledge of $(\sset{A}_1, \sset{U}_1)$, will then determine the secret $S_1$ to be shared, and the message $M_1$ to be sent. 
Then, at $t=2$, it might be decided the group $\{2,3\}$ should become authorized giving $\sset{A}_2 = \sset{A}_1 \cup \{\{2,3\}, \{1,2,3\}, \{2,3,4\}, \{1,2,3,4\}\}$ and $\sset{U}_2 = 2^{\set{L}} \backslash \sset{A}_2$. Now, the dealer having learned about $(\sset{A}_2, \sset{U}_2)$ and knowing the message $M_1$ that was previously sent, will determine a new secret $S_2$  to be shared, and a new message $M_2$ to be sent. 
This process could continue for as long as there are groups which are not yet authorized, but importantly cannot go on forever since $2^{\set{L}}$ is finite. If no more groups were added to the access structure after $t=2$, the AAS would be the sequence $((\sset{A}_1, \sset{U}_1),(\sset{A}_2, \sset{U}_2))$.

A secret rate tuple for the dealer's strategy will be considered achievable if at each time step every authorized group, using their observations and all public messages, can determine the secret for that time step with near certainty and if every unauthorized group, using their observations and all public messages, learns a negligible amount of information about the secret.

We then prove the surprising result that there exists a secret sharing strategy for an AAS where the secret rate achievable at a given time step is independent of the access structures at all other time steps, and that this secret rate matches the best known achievable secret rate for a fixed access \mbox{structure \cite{Distributed_Secret_Sharing}}. Additionally, we prove that at any time step where the access structure  forms a threshold access structure, the secret rate at that time step is equal to the secret sharing capacity of that threshold access structure. For the threshold AAS case, i.e., when at each time step one has a threshold access structure, it means that a dealer who is omniscient and knows the full AAS before sending any public messages has no advantage over a dealer who has to commit to a message at the current time step before learning which subsets of participants will be added to the access structure in the next time step.
In our achievability proof, we introduce a novel adaptation of random binning in which the number of binnings is quantized. This modification enables us to establish the existence of a single secret sharing strategy that is fully specified a priori and does not rely on any knowledge of how the AAS evolves over~time.

The remainder of this paper is organized as follows. In Section \ref{Problem Statement}, we define the setting of the problem. In \mbox{Section \ref{Results}}, we present the main results. In Section \ref{Proof of Theorem 1}, we prove the achievability result. Finally, in Section \ref{Conclusion}, we give some concluding remarks.

\section{Problem Statement} \label{Problem Statement}
Notation: $2^{\set{A}}$ is the powerset of a set $\set{A}$. 
$\cardinality{\set{A}}$ is the cardinality of a set $\set{A}$.
$\set{A} \backslash \set{B}$ is the set difference between sets $\set{A}$ and $\set{B}$.
For real numbers $a$ and $b$, $[a, b] \triangleq \{x \in \sset{N} : a \leq x \leq b\}$.
$\indicator{\cdot}$ is the indicator function.
$\expectation{A}$ is the expectation of a random variable $A$. $p_{X}$ is the probability distribution for a random variable $X$, which may sometimes be written as $p$ if the random variable is understood from context.  
$\variationaldistance{p_X}{\dot{p}_X} \triangleq \sum_{x \in \set{X}} \abs{p_X(x) - \dot{p}_X(x)}$ is the variational distance between two probability distributions defined over the same alphabet $\set{X}$.
The components of a tuple $x^n$ of length $n \in \sset{N}$ are denoted with subscripts, i.e., $x^n \triangleq (x_1, \ldots, x_n)$.

\begin{definition} \label{AAS Definition}
	For a set of participants $\set{L}$ indexed by $\set{L} \triangleq [1, L]$, an Additive Access Structure (AAS) is a sequence of Access Structures $(\sset{A}_t, \sset{U}_t)_{t \in \set{T}}$ over a set of time steps $\set{T} \triangleq [1, t_{max}]$, $\sset{A}_t \subseteq 2^{\set{L}}$ being the sets of participants considered authorized and $\sset{U}_t \subseteq 2^\set{L} \backslash \sset{A}_t$ being the sets of participants considered unauthorized at time $t$, that satisfies the following properties:
	\begin{enumerate}
        \item \label{AAS Definition: 1}
		$\forall t \in \set{T}$, $\sset{A}_{t-1} \subset \sset{A}_{t} $, where $\sset{A}_0 \triangleq \emptyset$. In other words, at least one group is newly authorized at every time step, and once a group is authorized they never become unauthorized.
        \item \label{AAS Definition: 2}
		$\forall t \in \set{T}$, $\sset{U}_{t-1} \supseteq \sset{U}_{t} $, where $\sset{U}_0 \triangleq 2^{\set{L}}$. In other words, once a group is no longer unauthorized, it can never become unauthorized again. 
        \item \label{AAS Definition: 3}
		$\forall t \in \set{T}, \forall \set{B} \subseteq \set{L}$, if $\exists \set{A} \in \sset{A}_t$ such that $\set{A} \subseteq \set{B}$, then $\set{B} \in \sset{A}_t$. In other words, the access structure is monotone.
	\end{enumerate}
\end{definition} 
\begin{definition}
    Let $\set{Y}$ be a finite alphabet and $\set{X}_{\set{L}}$ be the Cartesian Product of $L$ finite alphabets $(\set{X}_{l})_{l \in \set{L}}$. 
    A $\left((2^{nR_t}, \sset{A}_t, \sset{U}_t)_{t \in \set{T}}, n\right)$ secret sharing strategy for an AAS with discrete memoryless source $(\set{Y} \times \set{X}_{\set{L}}, p_{YX_{\set{L}}}) $ consists of
    \begin{itemize}
        \item A key alphabet $\set{S}_t \triangleq [1, 2^{nR_t}], ~ \forall t \in \set{T}$;
        \item A message alphabet $\set{M}_t \triangleq [1, \ceil{2^{nR^\prime_t}}], ~ \forall t \in \set{T}$;
        \item An encoding function $e_t: \set{Y}^n \times 2^{\set{L}} \times 2^{\set{L}} \times \set{S}^{t-1} \times \set{M}^{t-1} \to \set{S}_t, ~ \forall t \in \set{T}$;
        \item An encoding function $f_t: \set{Y}^n \times 2^{\set{L}} \times 2^{\set{L}} \times \set{S}^{t} \times \set{M}^{t-1} \to \set{M}_t, ~ \forall t \in \set{T}$;
        \item A decoding function $d_t(\set{A}): \set{X}_{\set{A}}^n \times \set{M}^t \to \set{S}_t \cup \{error\}, ~ \forall \mathcal{A} \in \mathbb{A}_t$, $\forall t \in \set{T}$;
    \end{itemize}
    and operates as follows: \\
	Initially, the dealer observes $Y^n$ and each participant $l \in \set{L}$ observes $X_l^n$. Then, at each time $t \in \set{T}$ the following happens:
    \begin{enumerate}
        \item The dealer learns $\sset{A}_t$ and $\sset{U}_t$.
        \item The dealer computes a secret $S_t = e_t(Y^n, \sset{A}_t, \sset{U}_t, S^{t-1}, M^{t-1})$. 
        \item The dealer sends, over a noiseless public authenticated channel, a message $M_t = f_t(Y^n, \sset{A}_t, \sset{U}_t, S^{t}, M^{t-1})$ to every participant in $\set{L}$.
        \item Any subset of participants $\set{A} \in \sset{A}_t$ can compute 
        $\hat{S}_t = d_t(X_{\set{A}}^n, M^t)$.
    \end{enumerate}
\end{definition}

\begin{definition} \label{Achievability Definition}
	A secret rate tuple $(R_t)_{t \in \set{T}}$ is achievable if there exists a sequence of $\left((2^{nR_t}, \sset{A}_t, \sset{U}_t)_{t \in \set{T}}, n\right)$ secret sharing strategies such that for any $t \in \set{T}$: 
	\begin{align}
        \label{Achievability Definition: Reliability}
		\lim\limits_{n \to \infty} \max_{\set{A} \in \sset{A}_t} \probability{\hat{S}_t \neq S_t} &= 0, \tag{Reliability} \\
        \label{Achievability Definition: Secrecy}
		\lim\limits_{n \to \infty} \max_{\set{U} \in \sset{U}_t} I (S_t;M^t,X_{\set{U}}^n) &= 0, \tag{Secrecy} \\
        \label{Achievability Definition: Secret Uniformity}
		\lim\limits_{n \to \infty} \log\cardinality{\set{S}_t} - H(S_t) &= 0.
        \tag{Secret Uniformity}
	\end{align}
\end{definition}

\section{Results} \label{Results}
We begin with two theorems which give lower and upper bounds respectively on achievable secret  rates for an AAS.
\begin{theorem}[Achievability] \label{Achievability Result}
    The secret rate tuple $(R_t)_{t \in \set{T}}$ is achievable for an AAS, where
    \begin{equation*}
        R_t = \min_{\set{U} \in \sset{U}_t} H(Y|X_{\set{U}}) 
        - \max_{\set{A} \in \sset{A}_t} H(Y|X_{\set{A}}), 
        ~\forall t \in \set{T}.
    \end{equation*}
\end{theorem}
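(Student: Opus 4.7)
The plan is to construct a single random-coding scheme, fixed in advance of any knowledge of how the access structure will evolve, that achieves the claimed rate simultaneously at every time step. The key idea, which I will call \emph{quantizing the number of binnings}, is that rather than tailoring one random binning of rate $R_t$ to each pair $(\sset{A}_t, \sset{U}_t)$, the dealer draws a fixed family of $K$ independent uniform random binnings of $Y^n$, each of rate $\delta$, and at each time step only decides how many of these binnings to concatenate into $M_t$ versus how many to extract as $S_t$. Monotonicity of the access structure will make this allocation causal, so no previously revealed bin index is ever revoked.

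Concretely, fix small $\delta,\gamma > 0$ and $K \geq \lceil \log |\set{Y}|/\delta \rceil$, and draw $K$ independent uniform random binnings $B^{(k)}: \set{Y}^n \to [1, \lceil 2^{n\delta} \rceil]$. On observing $(\sset{A}_t, \sset{U}_t)$ the dealer sets
\begin{align*}
k_t &\triangleq \left\lceil \tfrac{1}{\delta}\left( \max_{\set{A} \in \sset{A}_t} H(Y|X_{\set{A}}) + \gamma \right) \right\rceil, \\
m_t &\triangleq \left\lfloor \tfrac{1}{\delta}\left( \min_{\set{U} \in \sset{U}_t} H(Y|X_{\set{U}}) - \gamma \right) \right\rfloor - k_t,
\end{align*}
and outputs $M_t \triangleq (B^{(k)}(Y^n))_{k = k_{t-1}+1}^{k_t}$ and $S_t \triangleq (B^{(k)}(Y^n))_{k = k_t+1}^{k_t + m_t}$. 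The increment $M_t$ is well-defined because $\max_{\set{A}} H(Y|X_{\set{A}})$ is non-decreasing in $t$ by Property~\ref{AAS Definition: 1} of Definition~\ref{AAS Definition}, giving $k_t \geq k_{t-1}$. The cumulative content of $M^t$ is exactly $(B^{(1)},\ldots,B^{(k_t)})(Y^n)$, of rate $k_t \delta > \max_{\set{A} \in \sset{A}_t} H(Y|X_{\set{A}})$, so each authorized $\set{A} \in \sset{A}_t$ can recover $Y^n$ from $(M^t, X_{\set{A}}^n)$ via a standard joint-typicality decoder for source coding with side information, and hence compute $\hat{S}_t = S_t$, establishing Reliability.

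For Secrecy and Secret Uniformity at time $t$, the concatenation $(B^{(1)},\ldots,B^{(k_t+m_t)})$ is itself a uniform random binning of rate $(k_t + m_t)\delta < \min_{\set{U} \in \sset{U}_t} H(Y|X_{\set{U}})$. Applying the usual balanced-coloring/crypto lemma argument (as in the fixed access structure proof of \cite{Distributed_Secret_Sharing}) to this concatenated binning, the joint index is asymptotically uniform and asymptotically independent of $X_{\set{U}}^n$ for every $\set{U} \in \sset{U}_t$. In particular $S_t$ becomes asymptotically uniform and asymptotically independent of $(M^t, X_{\set{U}}^n)$, giving both $\log|\set{S}_t| - H(S_t) \to 0$ and $I(S_t; M^t, X_{\set{U}}^n) \to 0$ in expectation over the random code. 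A union bound over the finitely many triples $(t,\set{A},\set{U})$ that can ever appear, followed by a standard selection argument, extracts one deterministic codebook satisfying all three conditions simultaneously.

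The main technical obstacle is bookkeeping the quantization loss: the choices of $k_t$ and $m_t$ force a gap of at most $2\delta + 2\gamma$ between $m_t\delta$ and the target $R_t$. I resolve this with a diagonal argument, letting $\delta_n,\gamma_n \to 0$ slowly enough with $n$ that the finite-code source-coding and binning-secrecy bounds still concentrate, so the achieved secret rate at each time step tends to $R_t$. A subtler point is that using the \emph{same} codebook across all time steps could in principle create cross-time correlations that spoil secrecy, but this does not occur here because each secrecy constraint only involves the cumulative history $M^t$, which is exactly captured by the single crypto-lemma bound on the concatenation $(B^{(1)},\ldots,B^{(k_t+m_t)})$.
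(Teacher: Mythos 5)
Your proposal is correct and is essentially the same proof as in the paper: both pre-draw a fixed family of $O(1/\epsilon)$ independent rate-$\epsilon$ random binnings of $Y^n$, and at each time step the dealer merely decides how many bins to allocate to the cumulative message (bounded below by $\max_{\set{A}\in\sset{A}_t}H(Y|X_{\set{A}})/\epsilon$ for Slepian--Wolf decoding) and how many to extract as secret (bounded above so the total used stays below $\min_{\set{U}\in\sset{U}_t}H(Y|X_{\set{U}})/\epsilon$ for the leftover-hash/crypto-lemma bound), then union-bound over the finitely many $(t,\sset{A}_t,\sset{U}_t,\set{A},\set{U})$ configurations and apply a selection lemma. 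The one structural difference, which is cosmetic under this paper's secrecy definition, is that you draw a single family $B^{(1)},\dots,B^{(K)}$ and take $S_t$ from bins $k_t+1,\dots,k_t+m_t$ immediately after the message bins (so later message increments can publicly expose bins that were part of an earlier $S_t$, which the definition tolerates since secrecy of $S_t$ is only measured against $M^t$), whereas the paper keeps two disjoint families $(g_i)$ and $(h_j)$ and defines $S_t$ as the prefix $\concat_{j=1}^{\sigma_t}h_j(Y^n)$ of the secret family, so no secret bin is ever released on the public channel; your diagonal argument in $\delta_n,\gamma_n$ is likewise a minor variant of the paper's fix-$\epsilon$-then-let-$\epsilon\to 0$ bookkeeping.
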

\begin{proof}
    See Section \ref{Proof of Theorem 1}. 
\end{proof}

One consequence of Theorem \ref{Achievability Result} is that the achievable rate $R_t$ at time $t$ only depends on the distribution $p_{YX_{\set{L}}}$ and the access structure $(\sset{A}_t, \sset{U}_t)$. This means that the achievable secret rate for a particular access structure at a given time step does not depend on how the AAS evolves. 
\begin{theorem}[Converse]
    \label{Converse Result}
    If a rate tuple $(R_t)_{t \in \set{T}}$ is achievable for an AAS, then it satisfies the constraints
    \begin{equation*}
        R_t \leq \min_{\set{U} \in \sset{U}_t} \min_{\set{A} \in \sset{A}_t} I(Y;X_{\set{A}}|X_{\set{U}}), 
        ~\forall t \in \set{T}.
    \end{equation*}
\end{theorem}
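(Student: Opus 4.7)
The plan is to adapt the standard information-theoretic converse for secret sharing to the AAS setting. Fix any achievable $(R_t)_{t \in \set{T}}$, a time step $t \in \set{T}$, and any pair $\set{A} \in \sset{A}_t$, $\set{U} \in \sset{U}_t$; I would show $R_t \leq I(Y; X_{\set{A}} \mid X_{\set{U}})$ for each such pair, and then take the double minimum.

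First I would convert the three achievability conditions into three scalar inequalities. From \ref{Achievability Definition: Secret Uniformity} I get $H(S_t) \geq nR_t - \epsilon_n^{(1)}$. From \ref{Achievability Definition: Reliability} combined with Fano's inequality applied to the decoder $d_t(\set{A})$ (which recovers $S_t$ from $X_{\set{A}}^n$ and $M^t$), I get $H(S_t \mid X_{\set{A}}^n, M^t) \leq n\epsilon_n^{(2)}$, hence $H(S_t) \leq I(S_t; X_{\set{A}}^n, M^t) + n\epsilon_n^{(2)}$. From \ref{Achievability Definition: Secrecy} I get $I(S_t; X_{\set{U}}^n, M^t) \leq \epsilon_n^{(3)}$. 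Chaining these and then splitting by conditioning on $(X_{\set{U}}^n, M^t)$ yields
\begin{equation*}
    nR_t \leq I(S_t; X_{\set{A}}^n \mid X_{\set{U}}^n, M^t) + \epsilon_n^{(1)} + n\epsilon_n^{(2)} + \epsilon_n^{(3)}.
\end{equation*}

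The remaining task is to upper bound $I(S_t; X_{\set{A}}^n \mid X_{\set{U}}^n, M^t)$ by $n I(Y; X_{\set{A}} \mid X_{\set{U}})$. The key observation, to be argued by induction on $t$, is that the encoding maps $e_\tau, f_\tau$ are deterministic and depend only on $Y^n$ and quantities that are themselves functions of $Y^n$ (the previous $S^{\tau-1}, M^{\tau-1}$, together with the fixed access structures), so $(S_t, M^t)$ is a deterministic function of $Y^n$. Using the chain rule together with this fact,
\begin{equation*}
    I(S_t; X_{\set{A}}^n \mid X_{\set{U}}^n, M^t) \leq I(S_t, M^t; X_{\set{A}}^n \mid X_{\set{U}}^n) \leq I(Y^n; X_{\set{A}}^n \mid X_{\set{U}}^n),
\end{equation*}
and the i.i.d.\ assumption on $(Y, X_{\set{L}})$ gives $I(Y^n; X_{\set{A}}^n \mid X_{\set{U}}^n) = n I(Y; X_{\set{A}} \mid X_{\set{U}})$. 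Dividing through by $n$, letting $n \to \infty$, and minimizing over both $\set{A} \in \sset{A}_t$ and $\set{U} \in \sset{U}_t$ yields the claim.

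There is no serious obstacle here; the argument is close to a standard single-shot converse for secret-key / secret-sharing problems. The only care needed is the inductive verification that, in the AAS setting where each $e_t, f_t$ consumes all past secrets and messages, the pair $(S_t, M^t)$ remains a deterministic function of $Y^n$ alone, so that the data processing step $I(S_t, M^t; X_{\set{A}}^n \mid X_{\set{U}}^n) \leq I(Y^n; X_{\set{A}}^n \mid X_{\set{U}}^n)$ is valid.
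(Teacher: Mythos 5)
Your proof is correct. The paper's own converse proof is a one-sentence reduction: for any fixed $t$, treat $(\sset{A}_t, \sset{U}_t)$ as a static access structure with secret $S_t$ and cumulative public message $M^t$, and invoke the converse of the fixed-access-structure model from \cite{Distributed_Secret_Sharing}. Your argument is a self-contained derivation of that bound via the standard steps (near-uniformity of $S_t$, Fano's inequality for the authorized decoder $d_t(\set{A})$, vanishing leakage $I(S_t; M^t, X_{\set{U}}^n)\to 0$, the chain-rule split $I(S_t; X_{\set{A}}^n, M^t) \le I(S_t; X_{\set{U}}^n, M^t) + I(S_t; X_{\set{A}}^n \mid X_{\set{U}}^n, M^t)$, and data processing). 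The one step you correctly flag as needing care in the AAS setting — that the iterative encoders $e_\tau, f_\tau$ consume $S^{\tau-1}, M^{\tau-1}$, so one must verify by induction that $(S_t, M^t)$ is a deterministic function of $Y^n$ (and the fixed AAS) so that $I(S_t, M^t; X_{\set{A}}^n \mid X_{\set{U}}^n) \leq I(Y^n; X_{\set{A}}^n \mid X_{\set{U}}^n)$ holds — is precisely the fact that makes the paper's static reduction legitimate. In that sense your writeup makes explicit the bridging observation that the paper's terse citation leaves implicit, which is a modest but genuine improvement in rigor; the underlying mathematical content is the same.
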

\begin{proof}
    For any fixed time $t$, we treat $(\sset{A}_t, \sset{U}_t)$ as a static access structure, in which case the converse result of \cite{Distributed_Secret_Sharing} applies.
\end{proof}

We now provide an example where Theorems \ref{Achievability Result} and \ref{Converse Result} match at any time $t$, for threshold AAS, defined as follows.
\begin{definition} \label{Threshold AAS}
A threshold AAS (TAAS) is an AAS, parameterized by functions $f_1: \set{T} \to [0, L]$ and $f_2: \set{T} \to [0, L]$, such that $(\sset{A}_t, \sset{U}_t)$ is a threshold access structure for every time step $t$, i.e., $\sset{A}_t = \{\set{A} \subseteq \set{L} : \cardinality{\set{A}} \geq f_1(t)\}$ and $\sset{U}_t = \{\set{U} \subseteq \set{L} : \cardinality{\set{U}} \leq f_2(t)\}$, where $f_1$ and $f_2$ represent the authorization and unauthorized thresholds, respectively, at each time $t$. 
\end{definition}

\begin{theorem}[Capacity Result]
\label{proposition:2}
Consider the case where $p_{YX_{\set{L}}}$ is defined such that $Y = X_{\set{L}}$ and $\forall l \in \set{L}, p_{X_l}(x_l) = \frac{1}{\cardinality{\set{X}}}$, which models the dealer sharing a private key of size $\cardinality{\set{X}}$ with each participant and where these private keys are drawn independently from a uniform distribution. For a TAAS parameterized by $f_1$ and $f_2$,  the secret  capacity at time $t$ is $R_t = H(X)\left(f_1(t)-f_2(t)\right)$, for all $t \in \set{T}$. 
\end{theorem}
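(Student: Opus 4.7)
The plan is to show that the achievability bound of Theorem~\ref{Achievability Result} and the converse bound of Theorem~\ref{Converse Result} coincide at $H(X)(f_1(t)-f_2(t))$ for each $t \in \set{T}$ under the stated source model. The key structural fact I will exploit throughout is that, since $Y = X_{\set{L}}$ and the components $(X_l)_{l \in \set{L}}$ are independent and uniform on $\set{X}$, for any subset $\set{B} \subseteq \set{L}$ we have $H(Y \mid X_{\set{B}}) = H(X_{\set{L} \setminus \set{B}}) = (L - \cardinality{\set{B}}) H(X)$, where $H(X) \triangleq \log\cardinality{\set{X}}$.

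First, I will evaluate the achievability bound. The term $\min_{\set{U} \in \sset{U}_t} H(Y \mid X_{\set{U}}) = (L - \max_{\set{U} \in \sset{U}_t} \cardinality{\set{U}}) H(X) = (L - f_2(t)) H(X)$ using that $\sset{U}_t$ contains all subsets of size at most $f_2(t)$. Similarly $\max_{\set{A} \in \sset{A}_t} H(Y \mid X_{\set{A}}) = (L - \min_{\set{A} \in \sset{A}_t}\cardinality{\set{A}}) H(X) = (L - f_1(t)) H(X)$. Taking the difference yields the desired lower bound $R_t \geq (f_1(t) - f_2(t)) H(X)$.

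Next, I will evaluate the converse bound. Because $Y = X_{\set{L}}$, we have $H(X_{\set{A}} \mid Y, X_{\set{U}}) = 0$, so $I(Y; X_{\set{A}} \mid X_{\set{U}}) = H(X_{\set{A}} \mid X_{\set{U}}) = \cardinality{\set{A} \setminus \set{U}}\, H(X)$ by independence. The double minimum is therefore realized by choosing $\set{U} \in \sset{U}_t$ and $\set{A} \in \sset{A}_t$ that maximize $\cardinality{\set{A} \cap \set{U}}$. Since $(\sset{A}_t, \sset{U}_t)$ is a threshold structure (so necessarily $f_2(t) < f_1(t)$, else $\sset{A}_t \cap \sset{U}_t \neq \emptyset$), I can pick any $\set{U}$ of size $f_2(t)$ and any $\set{A} \supseteq \set{U}$ of size $f_1(t)$, giving $\cardinality{\set{A} \setminus \set{U}} = f_1(t) - f_2(t)$. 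This delivers the matching upper bound $R_t \leq (f_1(t) - f_2(t)) H(X)$.

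Combining the two bounds yields the claimed capacity. I expect no real obstacle here — the argument is a direct substitution into Theorems~\ref{Achievability Result} and~\ref{Converse Result}. The only mild subtlety is verifying that the nested choice $\set{U} \subseteq \set{A}$ used for the converse is simultaneously admissible in both $\sset{U}_t$ and $\sset{A}_t$, which is immediate from the threshold structure and the consistency requirement $\sset{A}_t \cap \sset{U}_t = \emptyset$ built into Definition~\ref{AAS Definition}.
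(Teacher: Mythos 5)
Your proposal is correct and takes essentially the same approach as the paper: substitute the product-uniform source model into Theorems~\ref{Achievability Result} and~\ref{Converse Result}, reduce every conditional entropy to a count of remaining independent coordinates times $H(X)$, and observe that the converse minimum is attained by a nested pair $\set{U}\subseteq\set{A}$ with $\cardinality{\set{U}}=f_2(t)$, $\cardinality{\set{A}}=f_1(t)$. The only cosmetic difference is that you expand $I(Y;X_{\set{A}}|X_{\set{U}})$ as $H(X_{\set{A}}|X_{\set{U}})-H(X_{\set{A}}|Y,X_{\set{U}})$ whereas the paper expands it as $H(X_{\set{L}}|X_{\set{U}})-H(X_{\set{L}}|X_{\set{A}},X_{\set{U}})$; both immediately yield $\cardinality{\set{A}\setminus\set{U}}H(X)$.
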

\begin{proof}
See the appendix.
\end{proof}
\section{Proof of Theorem \ref{Achievability Result}} 
\label{Proof of Theorem 1}
Notation: $\concat_{i=k}^j x_i$ is the concatenation of strings $x_k, \ldots, x_j$.
$\delta(\epsilon)$ is a function of $\epsilon$ that satisfies $\lim_{\epsilon \to 0} \delta(\epsilon) = 0$. 
$\delta_{\epsilon}(n)$ is a function of $\epsilon$ and $n$ that satisfies $\lim_{n \to \infty} \delta_{\epsilon}(n) = 0$. 
$\typicalset{X} \triangleq \big\{ x^n \in \set{X}^n : \cardinality[\Big]{\dfrac{1}{n}\sum_{i=1}^n \indicator{x_i = x} - p_{X}(x)} \leq \epsilon p_{X}(x), \forall x \in \set{X} \big\}$ is the set of $\epsilon$-typical sequences $x^n$ associated with $p_{X}$, e.g.,  \cite{Physical_Layer_Security}. 
$\typicalset{XY|x^n} \triangleq \big\{ y^n \in \set{Y}^n : (x^n, y^n) \in \typicalset{XY} \big\}$ is the set of conditionally $\epsilon$-typical sequences $y^n$ associated with $p_{XY}$ with respect to $x^n \in \typicalset{X}$, e.g.,  \cite{Physical_Layer_Security}.
\subsection{Secret Sharing Strategy}
\label{Secret Sharing Strategy}
For $\epsilon > 0$, for each $y^n \in \set{Y}^n$, 
draw uniformly at random 
$b$ indices from the sets $\set{M}_\epsilon \triangleq [1, \ceil{2^{n\epsilon}}]$ and $\set{S}_\epsilon \triangleq [1, 2^{n\epsilon}]$ where
	\[b \triangleq 
		\ceil[\bigg]{ 
			\frac
				{\log \cardinality{ \set{Y} } }
				{\epsilon} 
		}.
	\]
Let these random binnings define the functions
\begin{align*}
	g_i: \set{Y}^n &\rightarrow \set{M}_\epsilon &\forall i \in [1, b], \\
	h_j: \set{Y}^n &\rightarrow \set{S}_\epsilon &\forall j \in [1, b].
\end{align*}

For each time $t$, $k_t: \sset{A}_t \times \sset{U}_t \to [0, b]$ and $\sigma_t: \sset{A}_t \times \sset{U}_t \to [0, b]$ will represent the number of quantized pieces $M_\epsilon$ and $S_\epsilon$ that are used to construct message $M_t$ and secret $S_t$ respectively. For  convenience, we  write $k_t$ and $\sigma_t$ instead of $k_t(\sset{A}_t, \sset{U}_t)$ and $\sigma_t(\sset{A}_t, \sset{U}_t)$ when the access structure is known from context. We define $k_0 \triangleq 0$ and $\sigma_0 \triangleq 0$. 
The encoding and decoding at time $t\in \mathcal{T}$ is as follows.

Encoding at the dealer:
Given $y^n$, $k_{t-1}$ and $\sigma_{t-1}$, choose values $k_t \geq k_{t-1}$ and $\sigma_t$ and output
\begin{align}
    \label{total message}
	m_t &\triangleq 
    \begin{dcases}
        \concat_{i = k_{t-1}+1}^{k_t} g_i(y^n), &k_t > k_{t-1} \\
        \emptyset, &k_t = k_{t-1} 
    \end{dcases}
    \\
    \label{total secret}
	s_t &\triangleq \concat_{j = 1}^{\sigma_t} h_j(y^n),
\end{align}
where $\emptyset$ is the empty string.

Decoding for a set of participants $\set{A} \in \sset{A}_t$: 
Given $m^t$, $k_t$, $\sigma_t$, and $x_{\set{A}}^n$, output $\hat{y}^n(\set{A})$ if it is the unique sequence such that 
	$(x_{\set{A}}^n,\hat{y}^n(\set{A})) \in \set{T}_\epsilon^n(X_{\set{A}} Y)$
and 
	$\concat_{i = 1}^{k_t} g_i(\hat{y}^n(\set{A})) = m^t$,
otherwise output $error$. 

Next, we prove that there exist binnings $g^{b}$ and $h^{b}$ that can be chosen without prior knowledge of the AAS and a method for how the dealer chooses $k_t$ and $\sigma_t$ at each time step $t$, that achieves the rates given in Theorem \ref{Achievability Result}. 

\subsection{Proof of Reliability} \label{Proof of Reliability}
\setcounter{notecounter}{0}
The event of an error in decoding is $\{\hat{Y}^n(\set{A}) \neq Y^n\} = \set{E}_0 \cup \set{E}_1$ where
\begin{align*}
    \set{E}_0 &\triangleq \{(X_{\set{A}}^n,Y^n) \notin \set{T}_\epsilon^n(X_{\set{A}} Y)\} \\
	\set{E}_1 &\triangleq \{\exists \hat{y}^n \neq Y^n, 
					(X_{\set{A}}^n,\hat{y}^n) \in \set{T}_\epsilon^n(X_{\set{A}} Y), \\
              &\phantom{-----------}
					\forall i \leq k_t, g_i(\hat{y}^n) = g_i(y^n)
				\}.
\end{align*}
Therefore, by the union bound, we have
\begin{align*}
	\expectation{
		\probability{
			\hat{Y}^n(\set{A}) \neq Y^n
		}
	} 
	&= 
	\expectation{
		\probability{
			\set{E}_0 \cup \set{E}_1
		}
	} \\
	&\leq 
	\expectation{
		\probability{
			\set{E}_0
		}
	} 
	+ 
	\expectation{
		\probability{
			\set{E}_1
		}
	},
\end{align*} 
where the expectation is over the choice of random binnings.\\
We now bound $\expectation{\probability{\set{E}_0}}$ as follows:
\begin{align}
	\expectation{\probability{\set{E}_0}} &= \expectation{\probability{(X_{\set{A}}^n,Y^n) \notin \typicalset{X_{\set{A}} Y}}} 
    \notag
    \\
	&= \expectation{ 1 - \probability{ (X_{\set{A}}^n,Y^n) \in \typicalset{X_{\set{A}} Y)} } }
    \notag
    \\
	&\leq \expectation{ 1 - (1 - \delta_{\epsilon}(n)) } 
    \notag
    \\
	\label{error 0 bound}
    &= \delta_{\epsilon}(n),
\end{align}
where the inequality follows from \cite[Corollary 2.1]{Physical_Layer_Security}.
Next, we bound $\expectation{\probability{\set{E}_1}}$ by
\begingroup
\allowdisplaybreaks
\begin{align}
	&\expectation{\probability{\set{E}_1}} 
	\notag
    \\
    &= \expectation{
			\probability{
				\exists \hat{y}^n \neq Y^n, 
				(X_{\set{A}}^n,\hat{y}^n) \in \typicalset{X_{\set{A}} Y}, 
                \notag
                \\
              &\phantom{--------------}
				\forall i \leq k_t, g_i(\hat{y}^n) = g_i(y^n)
			}
		}
    \notag
    \\
	&= \expectation[\Big]{
		\smash{\sum_{x_{\set{A}}^n, y^n} 
			p(x_{\set{A}}^n, y^n) 
			}\indicator{
				\exists \hat{y}^n \neq Y^n, 
				(X_{\set{A}}^n,\hat{y}^n) \in \typicalset{X_{\set{A}} Y}, 
                \notag
                \\
              &\phantom{--------------}
				\forall i \leq k_t, g_i(\hat{y}^n)=g_i(y^n)
            }
		}
    \notag
    \\
	&\leq \expectation[\Big]{
		\sum_{x_{\set{A}}^n, y^n} 
			p(x_{\set{A}}^n, y^n) 
            \mkern-30mu
			\sum_{
				\substack{
					\hat{y}^n \neq y^n \\ 
					\hat{y}^n \in \typicalset{X_{\set{A}} Y | x_{\set{A}}^n}
				}
			}
            \mkern-12mu
            \prod_{i=1}^{k_t}
			\indicator{
				g_i(\hat{y}^n)=g_i(y^n)
			}
		}
    \notag
	\\
	&= \sum_{x_{\set{A}}^n, y^n} 
		p(x_{\set{A}}^n, y^n)
        \sum_{
			\substack{
				\hat{y}^n \neq y^n \\ 
				\hat{y}^n \in \typicalset{X_{\set{A}} Y | x_{\set{A}}^n}
			}
		}
		\notag
        \\&\phantom{---------}\times
		\sum_{g^{k_t}}
		p(g^{k_t})
		\prod_{i=1}^{k_t}
		\indicator{
			g_i(\hat{y}^n)=g_i(y^n)
		}
    \notag
	\\
	&\note{=}
		\sum_{x_{\set{A}}^n, y^n} 
		p(x_{\set{A}}^n, y^n)
		\sum_{
			\substack{
				\hat{y}^n \neq y^n \\ 
				\hat{y}^n \in \typicalset{X_{\set{A}} Y | x_{\set{A}}^n}
			}
		}
		2^{-k_t n\epsilon}
    \notag
	\\
	&\leq 2^{-k_t n\epsilon}
		\sum_{x_{\set{A}}^n, y^n} 
		p(x_{\set{A}}^n, y^n)
		\cardinality[\big]{
			\typicalset{X_{\set{A}} Y | x_{\set{A}}^n}
		}
    \notag
	\\
	&\note{\leq} 2^{-k_t n\epsilon}
		\sum_{x_{\set{A}}^n, y^n} 
		p(x_{\set{A}}^n, y^n)
		2^{n(H(Y|X_{\set{A}}) + \delta(\epsilon))}
    \notag
	\\
	&= 2^{n(H(Y|X_{\set{A}}) + \delta(\epsilon) -k_t \epsilon )}
    \notag
	\\
    \label{error 1 bound}
	&\note{=} \delta_{\epsilon}(n),
\end{align}
\endgroup
where \refnote{2} follows from
\begin{align}
    \label{number of possible binnings}
	&\sum_{g^{k_t}}
		p(g^{k_t})
		\prod_{i=1}^{k_t}
		\indicator{
			g_i(\hat{y}^n) = g_i(y^n)
		}
    \\
	&= \prod_{i=1}^{k_t}
		\sum_{g_i}
		p(g_i)
		\indicator{
			g_i(\hat{y}^n) = g_i(y^n)
		}
    \notag
    \\
	&= \prod_{i=1}^{k_t}
		\sum_{g_i}
		\cardinality{ \set{M}_\epsilon } ^ { -\cardinality{\set{Y}^n} }
		\indicator{
			g_1(\hat{y}^n) = g_1(y^n)
		}
    \notag
	\\
	&= \prod_{i=1}^{k_t}
		\cardinality{ \set{M}_\epsilon } ^ { -\cardinality{\set{Y}^n} }
		\cardinality{ \set{M}_\epsilon } ^ { \cardinality{\set{Y}^n} - 1 }
    \notag
	\\
	&= \prod_{i=1}^{k_t}
		\cardinality{ \set{M}_\epsilon } ^ {-1} 
    \notag
	\\
	&= 2^{-k_t n\epsilon}, \notag
\end{align}
\refnote{1} follows from \cite[Theorem 2.2]{Physical_Layer_Security}, and \refnote{0} follows by choosing $k_t$ such that 
\begin{equation}
    \label{reliability constraint derived}
    k_t \epsilon > \max_{\set{A} \in \sset{A}_t} H(Y|X_{\set{A}}) + \delta(\epsilon).
\end{equation} 
From \eqref{error 0 bound} and \eqref{error 1 bound}, we have $\expectation{\probability{\hat{Y}^n(\set{A}) \neq Y^n}} \leq \delta_{\epsilon}(n)$.

\subsection{Proof of Secrecy} \label{Proof of Secrecy}
\setcounter{notecounter}{0}
For binnings $g^{b}$ and $h^{b}$ and $\set{U} \in \sset{U}_t$, such that $m^t = \concat_{i = 1}^{k_t} \mu_i$ and $s_t = \concat_{i = 1}^{\sigma_t} \nu_i$, we express the joint probability distribution $p_{M^t S_t X_{\set{U}}^n}$ as
\begin{align*}
	&p_{M^t S_t X_{\set{U}}^n} (m^t, s_t, x_{\set{U}}^n) \\
	&\note{=} \sum_{y^n} 
		p(x_{\set{U}}^n, y^n) 
		p(m^t|y^n) 
		p(s_t |y^n) \\
	&\note{=} \sum_{y^n} 
		p(x_{\set{U}}^n, y^n) 
		\prod_{i=1}^{k_t}
			\indicator{g_i(y^n) = \mu_i} 
		\prod_{j=1}^{\sigma_t}
			\indicator{h_j(y^n) = \nu_j},
\end{align*}
where \refnote{1} and \refnote{0} follow since the message and secret are uniquely determined by the dealer's observation and the choice of binnings and are therefore conditionally independent, from each other and from all other participant's observations, given the dealer's observation and the particular choice of binnings. 

Hence, on average over the choice of random binnings, we have:
\begin{align}
	&\expectation[\big]{p_{M^t S_t X_{\set{U}}^n} (m^t, s_t, x_{\set{U}}^n)} 
    \notag
    \\
	&= \sum_{g^{b},h^{b}} p(g^{b},h^{b}) 
		\sum_{y^n} 
			p(x_{\set{U}}^n, y^n) 
        \\&\phantom{-----}\times
			\prod_{i=1}^{k_t}
				\indicator{g_i(y^n) = \mu_i} 
			\prod_{j=1}^{\sigma_t}
				\indicator{h_j(y^n) = \nu_j} 
    \notag
	\\
	&= 2^{-n(\epsilon k_t + \epsilon \sigma_t)} 
		p_{X_{\set{U}}^n} (x_{\set{U}}^n)
    \notag
    \\
    \label{uniform message and secret}
	&= p_{M^t S_t}^{(\text{unif})} (m^t, s_t) 
        p_{X_{\set{U}}^n} (x_{\set{U}}^n),
\end{align}
where the second equality holds similar to \eqref{number of possible binnings}. Using \eqref{uniform message and secret}, we have
\begin{align}
	&\expectation[\big]{
		\variationaldistance
			{ p_{M^t S_t X_{\set{U}}^n} }
			{ p_{M^t S_t}^{(\text{unif})}p_{X_{\set{U}}^n}) }
	} \notag
    \\
	&= \expectation[\Big]{ 
		\smash{\sum_{m^t, s_t, x_{\set{U}}^n}} 
			\abs[\Big] {
				p_{M^t S_t X_{\set{U}}^n}(m^t, s_t, x_{\set{U}}^n) 
                \notag
        \\&\phantom{-----------}
				- p_{M^t S_t}^{(\text{unif})} (m^t, s_t) p_{X_{\set{U}}^n} (x_{\set{U}}^n)
			}
		} \notag
    \\
	&= \expectation[\Big]{ 
		\smash{\sum_{m^t, s_t, x_{\set{U}}^n}} 
			\abs[\Big]{ 
				p_{M^t S_t X_{\set{U}}^n}(m^t, s_t, x_{\set{U}}^n) 
                \notag
        \\&\phantom{-----------}
				 - \expectation[\big]{ p_{M^t S_t X_{\set{U}}^n}(m^t, s_t, x_{\set{U}}^n) }
			}
		} \notag
    \\
    \label{expectation of variational distance}
	\begin{split}
	&=\smash{\sum_{k=1}^2} 
		\expectation[\Big]{
			\smash{\sum_{m^t, s_t, x_{\set{U}}^n} }
				\abs[\big]{ 
					p_{M^t S_t X_{\set{U}}^n}^{(k)} (m^t, s_t, x_{\set{U}}^n)  
        \\&\phantom{----------}
			 		- \expectation[\big]{ p_{M^t S_t X_{\set{U}}^n}^{(k)} (m^t, s_t, x_{\set{U}}^n) }
				}
		},
    \end{split}
\end{align}
where we break up \eqref{expectation of variational distance} into two cases, one where the dealer's observation $y^n$ is jointly typical with the unauthorized group's observation $x_{\set{U}}^n$ and one where these observations are not jointly typical: 
\begin{align}
    \label{typical probability distribution}
	&p_{M^t S_t X_{\set{U}}^n}^{(1)}(m^t, s_t, x_{\set{U}}^n) = 
		\sum_{y^n \in \typicalset{X_{\set{U}} Y | x_{\set{U}}^n} } 
			p(x_{\set{U}}^n, y^n) 
            \notag
            \\&\phantom{--}\times
			\prod_{i=1}^{k_t}
				\indicator{g_i(y^n) = \mu_i} 
			\prod_{j=1}^{\sigma_t}
				\indicator{h_j(y^n) = \nu_j} 
	\\
    \label{atypical probability distribution}
	&p_{M^t S_t X_{\set{U}}^n}^{(2)}(m^t, s_t, x_{\set{U}}^n) = 
		\sum_{y^n \notin \typicalset{X_{\set{U}} Y | x_{\set{U}}^n} }
			p(x_{\set{U}}^n, y^n) 
            \notag
            \\&\phantom{--}\times
		\prod_{i=1}^{k_t}
				\indicator{g_i(y^n) = \mu_i} 
			\prod_{j=1}^{\sigma_t}
				\indicator{h_j(y^n) = \nu_j}.
\end{align}
For the case where the observations are typical, we bound the corresponding term in \eqref{expectation of variational distance} as follows:

\begin{align}
	&\expectation[\Big]{
		\sum_{m^t, s_t, x_{\set{U}}^n} 
			\abs[\big]{ 
				 p_{M^t S_t X_{\set{U}}^n}^{(1)} (m^t, s_t, x_{\set{U}}^n) 
				 - \expectation[\big]{ p_{M^t S_t X_{\set{U}}^n}^{(1)} (m^t, s_t, x_{\set{U}}^n) }
			}
	} \notag
	\\
    \label{sqrt variance equation}
	&\leq \sum_{m^t, s_t, x_{\set{U}}^n}
		\sqrt{
			\variance{ 
				 p_{M^t S_t X_{\set{U}}^n}^{(1)} (m^t, s_t, x_{\set{U}}^n) 
			}
		},
\end{align}
where the inequality holds by Jensen's Inequality.
Isolating the variance in \eqref{sqrt variance equation} and substituting in \eqref{typical probability distribution} we have
\begin{align}
	&\variance[\Big]{
		\sum_{y^n \in \typicalset{X_{\set{U}} Y | x_{\set{U}}^n}}
			p(x_{\set{U}}^n, y^n) 
            \notag
            \\&\phantom{--}\times
			\prod_{i=1}^{k_t}
				\indicator{g_i(y^n) = \mu_i} 
			\prod_{j=1}^{\sigma_t}
				\indicator{h_j(y^n) = \nu_j} 
	}
    \notag
	\\
	&\note{=} \sum_{y^n \in \typicalset{X_{\set{U}} Y | x_{\set{U}}^n}}
			p^2(x_{\set{U}}^n, y^n)  
            \notag
            \\&\phantom{--}\times
			\variance[\Big]{
				\prod_{i=1}^{k_t}
					\indicator{g_i(y^n) = \mu_i} 
				\prod_{j=1}^{\sigma_t}
					\indicator{h_j(y^n) = \nu_j}
			}
    \notag
	\\
	&\leq \sum_{y^n \in \typicalset{X_{\set{U}} Y | x_{\set{U}}^n}}
			p^2(x_{\set{U}}^n, y^n)  
            \notag
            \\&\phantom{--}\times
			\expectation[\Big]{
				\sqrd[\big]{
					\prod_{i=1}^{k_t}
						\indicator{g_i(y^n) = \mu_i} 
					\prod_{j=1}^{\sigma_t}
						\indicator{h_j(y^n) = \nu_j}
				}
			}
    \notag
	\\
	&= \sum_{y^n \in \typicalset{X_{\set{U}} Y | x_{\set{U}}^n}} 
			p^2(x_{\set{U}}^n, y^n)  
            \notag
            \\&\phantom{--}\times
			\expectation[\Big]{
				\prod_{i=1}^{k_t}
					\indicator{g_i(y^n) = m_i} 
				\prod_{j=1}^{\sigma_t}
					\indicator{h_j(y^n) = s_j}
			}
    \notag
	\\
	&\note{=} \sum_{y^n \in \typicalset{X_{\set{U}} Y | x_{\set{U}}^n}} 
			p^2(x_{\set{U}}^n, y^n) 
			2^{-n(\epsilon \sigma_t + \epsilon k_t)}
    \notag
	\\
	&=  2^{-n(\epsilon \sigma_t + \epsilon k_t)} 
		p^2(x_{\set{U}}^n)
		\sum_{y^n \in \typicalset{X_{\set{U}} Y | x_{\set{U}}^n}} 
			p^2(y^n|x_{\set{U}}^n)
    \notag
	\\
	&\note{\leq} 2^{-n(\epsilon \sigma_t + \epsilon k_t)}
		  p^2(x_{\set{U}}^n)
		\sum_{y^n \in \typicalset{X_{\set{U}} Y | x_{\set{U}}^n}} 
			2^{-2n(H(Y|X_{\set{U}}) - \delta(\epsilon))}
    \notag
	\\
	&=  2^{-n(\epsilon \sigma_t + \epsilon k_t)}
		p^2(x_{\set{U}}^n) 
		\cardinality[\big]{		
			\typicalset{X_{\set{U}} Y | x_{\set{U}}^n}
		}
		2^{-2n(H(Y|X_{\set{U}}) - \delta(\epsilon))}
    \notag
	\\
	&\note{\leq} 2^{-n(\epsilon \sigma_t + \epsilon k_t)}
		 p^2(x_{\set{U}}^n) 
		 2^{n(H(Y|X_{\set{U}}) + \delta(\epsilon))} 
		 2^{-2n(H(Y|X_{\set{U}}) - \delta(\epsilon))}
    \notag
	\\
    \label{Isolated Variance}
	&= p^2(x_{\set{U}}^n) 2^{-n(H(Y|X_{\set{U}}) - 3\delta(\epsilon) + (\epsilon \sigma_t + \epsilon k_t) },
\end{align}
where \refnote{3} holds because $p(x_{\set{U}}^n, y^n)$ is a constant and all of the random binnings are mutually independent, \refnote{2} holds similar \mbox{to \eqref{number of possible binnings}}, and \refnote{1} and \refnote{0} follow from \cite[Theorem 2.2]{Physical_Layer_Security}. \mbox{Substituting \eqref{Isolated Variance}} back into \eqref{sqrt variance equation}, we have
\begin{align*}
	&\sum_{m^t, s_t, x_{\set{U}}^n} 
		\sqrt{
			\variance{
				p_{M^t S_t X_{\set{U}}^n}^{(1)}(m^t, s_t, x_{\set{U}}^n)
			}
		} 
	\\
	&\leq \sum_{m^t, s_t, x_{\set{U}}^n} 
		\sqrt{
			p^2(x_{\set{U}}^n) 
			2^{
				- n(H(Y|X_{\set{U}}) 
				- 3\delta(\epsilon) 
				+ \epsilon \sigma_t
				+ \epsilon k_t 
			}
		}
	\\
	&= 2^{
			-\frac{n}{2} (
				H(Y|X_{\set{U}}) 
				- 3\delta(\epsilon) 
				+ \epsilon \sigma_t 
				+ \epsilon k_t
			) 
		} 
		\sum_{m^t, s_t, x_{\set{U}}^n} 
			p(x_{\set{U}}^n) 
	\\
	&= 2^{
			-\frac{n}{2} (
				H(Y|X_{\set{U}}) 
				- 3\delta(\epsilon) 
				+ \epsilon \sigma_t
				+ \epsilon k_t
			) 
		} 
		2^{
			n(\epsilon \sigma_t + \epsilon k_t)	
		}
	\\
	&= 2^{
			-\frac{n}{2} (
				H(Y|X_{\set{U}}) 
				- 3\delta(\epsilon) 
				- (\epsilon \sigma_t + \epsilon k_t )
			) 
		} 
	\\
	&\leq \delta_\epsilon(n),
\end{align*}
where the last inequality follows from choosing $\sigma_t$ and $k_t$ that satisfy
\begin{equation}
    \label{secrecy constraint derived}
    \epsilon \sigma_t + \epsilon k_t < 
    \min_{ \set{U} \in \sset{U}_t } H(Y|X_{\set{U}}) - 3\delta(\epsilon).
\end{equation}
For the case where the observations are not jointly typical, we bound the corresponding term in \eqref{expectation of variational distance} as follows:\\
\begin{align*}
	&\expectation[\Big]{
		\sum_{m^t, s_t, x_{\set{U}}^n} \!
			\abs[\big]{
				p_{M^t S_t X_{\set{U}}^n}^{(2)} (m^t, s_t, x_{\set{U}}^n) 
				- 
				\expectation[\big]{ 
					p_{M^t S_t X_{\set{U}}^n}^{(2)} (m^t, s_t, x_{\set{U}}^n) 
				}
			}
	} 
	\\
	&\note{\leq} \sum_{m^t, s_t, x_{\set{U}}^n}
		2
		\expectation[\big]{ 
			p_{M^t S_t X_{\set{U}}^n}^{(2)} (m^t, s_t, x_{\set{U}}^n)
		}
	\\
	&\note{=} 2 \sum_{m^t, s_t, x_{\set{U}}^n} 
		\sum_{y^n \notin \typicalset{X_{\set{U}} Y | x_{\set{U}}^n}}
			p(x_{\set{U}}^n, y^n)
			2^{-n(\epsilon \sigma_t + \epsilon k_t )}
	\\
	&= 2 \sum_{m^t, s_t}
			2^{-n(\epsilon \sigma_t + \epsilon k_t )}
		 	\probability{
		 		(X_{\set{U}}^n Y^n) \notin \typicalset{X_{\set{U}} Y}
		 	}
	\\
	&= 2 \probability{
			(X_{\set{U}}^n Y^n) \notin \typicalset{X_{\set{U}} Y}
		 }
	\\
	&\note{\leq} 2\delta_{\epsilon}(n),
\end{align*}
where \refnote{1} is by application of the triangle inequality, \refnote{0} holds similar to \eqref{number of possible binnings}, and \refnote{0} follows from \cite[Corollary 2.1]{Physical_Layer_Security}. Therefore,
$\expectation{
	\variationaldistance
		{ p_{M S X_{\set{U}}^n} } 
		{ p_{MS}^{(\text{unif})}p_{X_{\set{U}}^n}) } 
}
\leq \delta_{\epsilon}(n)$.
\setcounter{notecounter}{0}

This means for a fixed access structure $(\sset{A}_t,\sset{U}_t)$, by choosing $k_t(\sset{A}_t,\sset{U}_t)$ and $\sigma_t(\sset{A}_t,\sset{U}_t)$ that satisfy both \eqref{reliability constraint derived} and \eqref{secrecy constraint derived} it will be the case that
\begin{align}
\label{expected reliability bound}
    \expectation{
        \probability{
            \hat{Y}^n(\set{\set{A}}) \neq Y^n
        }
    }
    \leq
    \delta_{\epsilon}(n),\\
\label{expected variational distance bound}
    \expectation{
        \variationaldistance
            {p_{M^t S_t X_{\set{U}}^n}} 
            {p_{M^t S_t}^{(\text{unif})}p_{X_{\set{U}}^n}} 
    }
    \leq 
    \delta_{\epsilon}(n).
\end{align}
Using \eqref{expected reliability bound} and \eqref{expected variational distance bound}, we have
\begin{align}
    &\max_{\set{A} \in \sset{A}_t} 
        \expectation[\Big]{
            \probability{\hat{Y}^n(\set{A}) \neq Y^n } 
        }
    \notag
    \\&\phantom{-----}
    +
    \max_{\set{U} \in \sset{U}_t} 
        \expectation[\Big]{
            \variationaldistance[\big]
                {p_{M^t S_t X_{\set{U}}^n}} 
                {p_{M^t S_t}^{(\text{unif})}p_{X_{\set{U}}^n}}
        }
    \notag
    \\
    &\leq
    \smash{\sum_{\set{A} \in \sset{A}_t}} 
        \expectation[\Big]{
            \probability{\hat{Y}^n(\set{A}) \neq Y^n } 
        }
    \notag
    \\&\phantom{-----}
    +
    \sum_{\set{U} \in \sset{U}_t} 
        \expectation[\Big]{
            \variationaldistance[\big]
                {p_{M^t S_t X_{\set{U}}^n}} 
                {p_{M^t S_t}^{(\text{unif})}p_{X_{\set{U}}^n}}
        }
    \notag
    \\
    \label{bound for fixed access structure}
    &\leq
    \sum_{\set{A} \in \sset{A}_t} 
        \delta_{\epsilon}(n)
    +
    \sum_{\set{U} \in \sset{U}_t} 
        \delta_{\epsilon}(n).
\end{align}
To extend these bounds to any AAS, we show that the worst case over every time step and access structure is still bounded by $\delta_{\epsilon}(n)$. We have
\begin{align*}
	&\expectation[\bigg]{
		\sum_{t \in \set{T}} 
			\sum_{\sset{A}_t \subseteq 2^{\set{L}}}
				\sum_{\sset{U}_t \subseteq 2^{\set{L}} - \sset{A}_t}
				\parenthesis[\Big]{
					\max_{\set{A} \in \sset{A}_t} 
						\probability{\hat{Y}^n(A) \neq Y^n } 
                    \\&\phantom{--}
					+
					\max_{\set{U} \in \sset{U}_t} 
						\variationaldistance[\big]
							{p_{M^t S_t X_{\set{U}}^n}} 
							{p_{M^t S_t}^{(\text{unif})}p_{X_{\set{U}}^n}}		
                }
	}
	\\
	&\leq
	\expectation[\bigg]{
		\sum_{t \in \set{T}} 
			\sum_{\sset{A}_t \subseteq 2^{\set{L}}}
				\sum_{\sset{U}_t \subseteq 2^{\set{L}} - \sset{A}_t}
				\parenthesis[\Big]{
					\sum_{\set{A} \in \sset{A}_t} 
						\probability{\hat{Y}^n(A) \neq Y^n } 
                    \\&\phantom{--}
					+
					\sum_{\set{U} \in \sset{U}_t} 
						\variationaldistance[\big]
							{p_{M^t S_t X_{\set{U}}^n}} 
							{p_{M^t S_t}^{(\text{unif})}p_{X_{\set{U}}^n}}
				}
	}
	\\
	&=
	\sum_{t \in \set{T}} 
		\sum_{\sset{A}_t \subseteq 2^{\set{L}}}
			\sum_{\sset{U}_t \subseteq 2^{\set{L}} - \sset{A}_t}
			\parenthesis[\Big]{
				\sum_{\set{A} \in \sset{A}_t} 
					\expectation[\Big]{
						\probability{\hat{Y}^n(A) \neq Y^n } 
					}
				\\&\phantom{--}
                +
				\sum_{\set{U} \in \sset{U}_t} 
					\expectation[\Big]{
						\variationaldistance[\big]
							{p_{M^t S_t X_{\set{U}}^n}} 
							{p_{M^t S_t}^{(\text{unif})}p_{X_{\set{U}}^n}}
					}
			}
	\\
	&\leq
	\sum_{t \in \set{T}} 
		\sum_{\sset{A}_t \subseteq 2^{\set{L}}}
			\sum_{\sset{U}_t \subseteq 2^{\set{L}} - \sset{A}_t}
			\left(
				\sum_{\set{A} \in \sset{A}_t} 
					\delta_{\epsilon}(n)
				+
				\sum_{\set{U} \in \sset{U}_t} 
					\delta_{\epsilon}(n)
			\right)
	\\
	&\leq \delta_{\epsilon}(n),
\end{align*}
where the second  inequality holds from the choices of $k_t(\sset{A}_t,\sset{U}_t)$ and $\sigma_t(\sset{A}_t,\sset{U}_t)$ used in \eqref{bound for fixed access structure} for each $\sset{A}_t$ and $\sset{U}_t$. Since the expectation of the constraints over all the binnings is upper bounded by $\delta_{\epsilon}(n)$, using \cite[Lemma 2.2]{Physical_Layer_Security} shows there exists a particular choice of binnings $g^{b}$ and $h^{b}$ that for every possible AAS, at each time step $t \in \set{T}$,
\begin{align}
\label{Reliability Constraint}
    \max_{\set{A} \in \sset{A}_t} 
	\sset{P}[\hat{Y}^n(\set{A}) \neq Y^n] 
	&\leq \delta_{\epsilon}(n), \\
\label{Secrecy Constraint}
    \max_{\set{U} \in \sset{U}_t} 
	\variationaldistance{p_{M^t S_t X_{\set{U}}^n}}{ p_{M^t S_t}^{(\text{unif})}p_{X_{\set{U}}^n}} 
	&\leq \delta_{\epsilon}(n) 
\end{align}
are satisfied. The reliability condition of Definition \ref{Achievability Definition} is satisfied for every $t \in \set{T}$ since
\begin{align*}
\max_{\set{A} \in \sset{A}_t} 
		\sset{P}[\hat{S}_t \neq S_t]  \leq \max_{\set{A} \in \sset{A}_t} \sset{P}[\hat{Y}^n(\set{A}) \neq Y^n] 
       .
\end{align*}
For the secrecy condition of Definition \ref{Achievability Definition}, first note that \mbox{using \cite[Lemma 16]{Cuff_Thesis}},
$\variationaldistance{p_{M^t S_t X_{\set{U}}^n}}{ p_{M^t S_t}^{(\text{unif})}p_{X_{\set{U}}^n}} 
		\leq \delta_{\epsilon}(n)$
implies both
$\variationaldistance{p_{M^t X_{\set{U}}^n}}{ p_{M^t}^{(\text{unif})}p_{X_{\set{U}}^n}}
    \leq \delta_{\epsilon}(n)
$ and 
$\variationaldistance{p_{S_t}}{ p_{S_t}^{(\text{unif})}}
    \leq \delta_{\epsilon}(n)
$, from which we have for every $t \in \set{T}$
\begin{align}	
    \label{variational distance bound}
	&\max_{\set{U} \in \sset{U}_t} 
		\variationaldistance{p_{M^t S_t X_{\set{U}}^n}}{ p_{S_t}p_{M^t X_{\set{U}}^n}}
    \\
	&\note{\leq} \max_{\set{U} \in \sset{U}_t} 
		[\variationaldistance{p_{M^t S_t X_{\set{U}}^n}}{ p_{M^t S_t}^{(\text{unif})}p_{X_{\set{U}}^n}} \! + \!
		\variationaldistance{p_{M^t S_t}^{(\text{unif})}p_{X_{\set{U}}^n}}{ p_{S_t} p_{M^t X_{\set{U}}^n}}] 
    \notag
    \\
	&\note{\leq} \delta_{\epsilon}(n) + 
		\max_{\set{U} \in \sset{U}_t} 
			[\variationaldistance{p_{M^t S_t}^{(\text{unif})}p_{X_{\set{U}}^n}}{ p_{S_t}^{\text{(unif)}} p_{M^t X_{\set{U}}^n}} 
    \notag
    \\
    &\phantom{~\leq \delta_{\epsilon}(n) }    
            + 
			\variationaldistance{p_{S_t}^{(\text{unif})}p_{M^t X_{\set{U}}^n}}{ p_{S_t} p_{M^t X_{\set{U}}^n}}]
    \notag
    \\
	&= \delta_{\epsilon}(n) + 
		\max_{\set{U} \in \sset{U}_t} 
			\variationaldistance{p_{M^t}^{(\text{unif})}p_{X_{\set{U}}^n}}{ p_{M^t X_{\set{U}}^n}} +
			\variationaldistance{p_{S_t}^{(\text{unif})}}{ p_{S_t}}
    \notag
    \\
	&\leq \delta_{\epsilon}(n) + 
		\delta_{\epsilon}(n) + 
		\delta_{\epsilon}(n) 
    \notag
    \\
	&= \delta_{\epsilon}(n),
    \notag
\end{align}
where \refnote{1} and \refnote{0} follow from the triangle inequality.
Then, from \eqref{variational distance bound}, using \cite[Inequality 2]{On_Estimation_of_Information_via_Variation}, we get
$\max_{\set{U} \in \sset{U}_t} I (S_t;M^t,X_{\set{U}}^n) \leq \delta_{\epsilon}(n)$ for every $t \in \set{T}$. 
Finally, the secret uniformity condition of Definition \ref{Achievability Definition} holds because for every $t \in \set{T}$ we have $\variationaldistance{p_{S_t}}{p_{S_t}^{(\text{unif})}} \leq \delta_{\epsilon}(n)$ which implies $H(p_{S_t}^{(\text{unif})}) - H(p_{S_t}) \leq \delta_{\epsilon}(n)$ using \cite[Lemma 2.7]{Csiszár_Textbook}{ and therefore $\log|S_t| - H(S_t) \leq \delta_{\epsilon}(n)$.

\subsection{Achievable Rates} \label{Achievable Rates}
\setcounter{notecounter}{0}
Sections \ref{Proof of Reliability} and \ref{Proof of Secrecy} prove that a rate choice that satisfies 
\begin{align*}
	\epsilon k_t &> \max_{\set{A} \in \sset{A}_t} H(Y|X_{\set{A}}), \\
	\epsilon \sigma_t + \epsilon k_t &< \min_{\set{U} \in \sset{U}_t} H(Y|X_{\set{U}})
\end{align*}
is achievable.
At time $t$, the dealer  chooses 
\begin{align}
\label{k_t choice}
    k_t &= \ceil[\Bigg]{
        \frac{\displaystyle\max_{\set{A} \in \sset{A}_t} H(Y|X_{\set{A}}) + \delta(\epsilon)}{\epsilon}
    }, \\
\label{sigma_t choice}
    \sigma_t &= \floor[\Bigg]{
        \frac{\displaystyle\min_{\set{U} \in \sset{U}_t} H(Y|X_{\set{U}}) - \epsilon k_t - \delta(\epsilon)}{\epsilon}
    }.
\end{align}

Then, define the message rate $R^\prime_t \triangleq (k_t - k_{t-1})\epsilon$ and secret rate $R_t \triangleq \sigma_t \epsilon$ at time $t$. We have
\begin{align}
    \lim_{\epsilon \rightarrow 0} \sum_{i=1}^{t} R^\prime_i 
    &= \lim_{\epsilon \rightarrow 0} \sum_{i=1}^{t} (k_i - k_{i-1})\epsilon 
    \notag
    \\
    &= \lim_{\epsilon \rightarrow 0} k_t\epsilon  
    \notag
    \\
    \label{Message rate}
    &\leq \max_{\set{A} \in \sset{A}_t} H(Y|X_{\set{A}}),
\end{align}
where the last inequality holds by \eqref{k_t choice}. Then, using \eqref{sigma_t choice} \mbox{and \eqref{Message rate}}, we have
\begin{align*}
    \lim_{\epsilon \rightarrow 0} R_t 
    &= \lim_{\epsilon \rightarrow 0} \sigma_t \epsilon
    \\
    &\geq \min_{\set{U} \in \sset{U}_t} H(Y|X_{\set{U}}) 
        - \max_{\set{A} \in \sset{A}_t} H(Y|X_{\set{A}}) .
\end{align*}

\section{Concluding Remarks} \label{Conclusion} 
We generalize secret-sharing models that rely on correlated randomness and public communication, originally designed for a fixed access structure, to work for an Additive Access Structure. For this model, we have proven the existence of a secret sharing strategy that achieves the same secret rate at each time step as the the best known strategy for a fixed access structure. We have also proven that there exists a strategy that achieves the secret sharing capacity at any time step where the access structure is a threshold access structure. In that case, this proves that, counterintuitively, knowing the AAS in advance does not provide a benefit to the achievable secret rates.  While constructive achievability proofs have been proposed for secret sharing with fixed access structure, e.g., \mbox{\cite{Low-Complexity_Secret_Sharing_Schemes_Using_Correlated_Random_Variables_and_Rate-Limited_Public_Communication,
Secret_Sharing_Over_a_Gaussian_Broadcast_Channel_Optimal_Coding_Scheme_Design_and_Deep_Learning_Approach_at_Short_Blocklength,
sultana2025secretsharingschemescorrelated,
Explicit_Wiretap_Channel_Codes_via_Source_Coding_Universal_Hashing_and_Distribution_Approximation_When_the_Channels_Statistics_are_Uncertain}}, providing a constructive achievability proof for secret sharing with AAS remains an open problem. 

\appendix
We first prove the following result where Theorems \ref{Achievability Result} and~\ref{Converse Result} match at a given time $t$. 

\begin{proposition}
\label{proposition:1}
 Given an AAS, at any time $t$ where  $(\sset{A}_t, \sset{U}_t)$ form a threshold access structure $\sset{A}_t = \{\set{A} \subseteq \set{L} : \cardinality{\set{A}} \geq u\}$ and $\sset{U}_t = \{\set{U} \subseteq \set{L} : \cardinality{\set{U}} \leq v\}$, the secret capacity is $R_t = H(X) (u-v)$.
\end{proposition}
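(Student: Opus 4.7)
The plan is to prove Proposition~\ref{proposition:1} by evaluating the achievability bound of Theorem~\ref{Achievability Result} and the converse bound of Theorem~\ref{Converse Result} under the source distribution specified in Theorem~\ref{proposition:2}, namely $Y = X_{\set{L}}$ with the $X_l$ mutually independent and uniform on $\set{X}$, and showing both bounds coincide at $H(X)(u-v)$. The key simplification for this source is that every conditional entropy reduces to a cardinality count: for any $\set{B} \subseteq \set{L}$, independence and uniformity give $H(Y|X_{\set{B}}) = H(X_{\set{L} \setminus \set{B}}) = (L - \cardinality{\set{B}})H(X)$, so the set-valued extrema appearing in Theorems~\ref{Achievability Result} and~\ref{Converse Result} collapse to optimizations over subset sizes only.

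For the achievability side, I would substitute the above identity into Theorem~\ref{Achievability Result}. The maximum $\max_{\set{A} \in \sset{A}_t} H(Y|X_{\set{A}})$ is attained by the smallest authorized set, $\cardinality{\set{A}} = u$, yielding $(L-u)H(X)$, while the minimum $\min_{\set{U} \in \sset{U}_t} H(Y|X_{\set{U}})$ is attained by the largest unauthorized set, $\cardinality{\set{U}} = v$, yielding $(L-v)H(X)$. Subtracting gives $R_t \geq (u-v)H(X)$.

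For the converse, Theorem~\ref{Converse Result} provides $R_t \leq \min_{\set{U} \in \sset{U}_t} \min_{\set{A} \in \sset{A}_t} I(Y; X_{\set{A}}|X_{\set{U}})$. Since $Y = X_{\set{L}}$ and therefore $X_{\set{A}}$ is a deterministic function of $Y$, this mutual information simplifies to $H(X_{\set{A}}|X_{\set{U}}) = H(X_{\set{A}\cup\set{U}}) - H(X_{\set{U}}) = \cardinality{\set{A} \setminus \set{U}} H(X)$ by independence and uniformity. The remaining step is combinatorial: minimize $\cardinality{\set{A} \setminus \set{U}}$ subject to $\cardinality{\set{A}} \geq u$ and $\cardinality{\set{U}} \leq v$. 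Choosing $\cardinality{\set{A}} = u$, $\cardinality{\set{U}} = v$, and $\set{U} \subseteq \set{A}$ is feasible and achieves the minimum $u-v$, so $R_t \leq (u-v)H(X)$.

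The main obstacle is essentially bookkeeping: one must verify that the extremizing subsets in each bound are legitimate members of $\sset{A}_t$ and $\sset{U}_t$ under the threshold constraints, and that the cardinality arithmetic produces the correct sign. The key structural point is that the disjointness $\sset{A}_t \cap \sset{U}_t = \emptyset$ built into Definition~\ref{AAS Definition} forces $u > v$, making $u-v$ a positive integer and guaranteeing the existence of a pair $\set{U} \subseteq \set{A}$ with $\cardinality{\set{A}} = u$ and $\cardinality{\set{U}} = v$ that simultaneously extremizes the converse expression.
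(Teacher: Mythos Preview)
Your proposal is correct and follows essentially the same route as the paper: you evaluate Theorem~\ref{Achievability Result} and Theorem~\ref{Converse Result} under the source $Y=X_{\set{L}}$ with independent uniform $X_l$, reduce both bounds to cardinality counts, and match them at $H(X)(u-v)$. The only cosmetic difference is that in the converse you expand $I(Y;X_{\set{A}}|X_{\set{U}})$ as $H(X_{\set{A}}|X_{\set{U}})$ (using that $X_{\set{A}}$ is a function of $Y$) rather than as $H(X_{\set{L}}|X_{\set{U}})-H(X_{\set{L}}|X_{\set{A}},X_{\set{U}})$ as the paper does, but both simplifications land on $\cardinality{\set{A}\setminus\set{U}}\,H(X)$ and the same minimizer $\set{U}\subseteq\set{A}$, $\cardinality{\set{A}}=u$, $\cardinality{\set{U}}=v$.
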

\begin{proof}
First, we show that $R_t = H(X) (u-v)$ is an achievable rate. 
By Theorem \ref{Achievability Result}, 
    \begin{align*}
        R_t &= \min_{\set{U} \in \sset{U}_t} H(Y|X_{\set{U}}) 
        - \max_{\set{A} \in \sset{A}_t} H(Y|X_{\set{A}})
        \\
        &= \min_{\set{U} \in \sset{U}_t} H(X_{\set{L}}|X_{\set{U}}) 
        - \max_{\set{A} \in \sset{A}_t} H(X_{\set{L}}|X_{\set{A}})
        \\
        &= \min_{\set{U} \in \sset{U}_t} H(X_{\set{L} \backslash \set{U}}) 
        - \max_{\set{A} \in \sset{A}_t} H(X_{\set{L} \backslash \set{A}})
        \\
        &= H(X)(L-v)
        - H(X)(L-u)
        \\
        &= H(X)(u-v),
    \end{align*}
where the fourth equality follows from the independence of the private keys.
Next, we show that this rate is optimal.
By Theorem \ref{Converse Result}, 
    \begin{align*}
        R_t &\leq \min_{\set{U} \in \sset{U}_t} \min_{\set{A} \in \sset{A}_t} 
                I(Y;X_{\set{A}}|X_{\set{U}}) \\
            &= \min_{\set{U} \in \sset{U}_t} \min_{\set{A} \in \sset{A}_t} 
                I(X_{\set{L}};X_{\set{A}}|X_{\set{U}}) \\
            &= \min_{\set{U} \in \sset{U}_t} \min_{\set{A} \in \sset{A}_t} 
                H(X_{\set{L}}|X_{\set{U}}) -
                H(X_{\set{L}}|X_{\set{A}},X_{\set{U}}) \\
            &= \min_{\set{U} \in \sset{U}_t} \min_{\set{A} \in \sset{A}_t} 
                H(X_{\set{L} \backslash \set{U}}) -
                H(X_{\set{L} \backslash (\set{A} \cup \set{U})}) \\
            &= \min_{\set{U} \in \sset{U}_t} \min_{\set{A} \in \sset{A}_t} 
                H(X)(\cardinality{\set{L} \backslash \set{U}}) -
                H(X)(\cardinality{\set{L} \backslash (\set{A} \cup \set{U})}) \\
            &= \min_{\set{U} \in \sset{U}_t} \min_{\set{A} \in \sset{A}_t} 
                H(X)(\cardinality{\set{L}}-\cardinality{\set{U}}) -
                H(X)(\cardinality{\set{L}}-\cardinality{\set{A} \cup \set{U}}) \\
            &= \min_{\set{U} \in \sset{U}_t} \min_{\set{A} \in \sset{A}_t} 
                H(X)(\cardinality{\set{A} \cup \set{U}} -\cardinality{\set{U}}) \\
            &= \min_{\set{U} \in \sset{U}_t} \min_{\set{A} \in \sset{A}_t} 
                H(X)(\cardinality{\set{A}} -\cardinality{\set{A} \cap \set{U}}),
    \end{align*}
    which is minimized when $\cardinality{\set{A}} = u$ and $\set{U} \subseteq \set{A}$, giving $R_t \leq H(X)(u-v)$.

\end{proof}
Finally, from Proposition \ref{proposition:1}, we deduce Theorem \ref{proposition:2}.

\bibliographystyle{ieeetr}
\bibliography{abbreviations,references}

\end{document}